\newtheorem{assumption}{Assumption}
\newtheorem{proposition}{Proposition}
\newtheorem{theorem}{Theorem}
\title{Does Low Spoilage Under Cold Conditions Foster Cultural Complexity During the Foraging Era? \\ A Theoretical and Computational Inquiry}
\author{Minhyeok Lee}
\date{}
\begin{document}
\maketitle

\begin{abstract}
Human cultural complexity did not arise in a vacuum. Scholars in the humanities and social sciences have long debated how ecological factors, such as climate and resource availability, enabled early hunter-gatherers to allocate time and energy beyond basic subsistence tasks. This paper presents a formal, interdisciplinary approach that integrates theoretical modeling with computational methods to examine whether conditions that allow lower spoilage of stored food, often associated with colder climates and abundant large fauna, could indirectly foster the emergence of cultural complexity. Our contribution is twofold. First, we propose a mathematical framework that relates spoilage rates, yield levels, resource management skills, and cultural activities. Under this framework, we prove that lower spoilage and adequate yields reduce the frequency of hunting, thus freeing substantial time for cultural pursuits. Second, we implement a reinforcement learning simulation, inspired by engineering optimization techniques, to validate the theoretical predictions. By training agents in different $(Y,p)$ environments, where $Y$ is yield and $p$ is the probability of daily spoilage, we observe patterns consistent with the theoretical model: stable conditions with lower spoilage strongly correlate with increased cultural complexity. While we do not claim to replicate prehistoric social realities directly, our results suggest that ecologically stable niches provided a milieu in which cultural forms could germinate and evolve. This study, therefore, offers an integrative perspective that unites humanistic inquiries into the origins of culture with the formal rigor and exploratory power of computational modeling.
\end{abstract}

\section{Introduction}
The origin and early expansion of human cultural complexity have occupied a central place in debates among archaeologists, anthropologists, and historians of prehistory. Since the early twentieth century, scholars have recognized that certain ecological conditions might have created more opportunities for artistic, symbolic, and social innovation. Yet, these arguments often relied on speculative, albeit insightful, narratives rather than robust theoretical or computational foundations. With the advent of more nuanced analytical tools, recent scholarship has begun to rigorously test the links between environmental stability, resource availability, and cultural development. By leveraging computational modeling alongside ethnographic and archaeological data, contemporary research aims to understand not just the presence of symbolic artifacts or social institutions, but also the underlying ecological and technological preconditions that may have nurtured these cultural forms.

In recent decades, numerous interdisciplinary studies have aimed to elucidate the relationships between climate, subsistence strategies, and cultural development (see, for instance, \cite{burdukiewicz2014origin}, \cite{chase1994symbols}, \cite{sterelny2021foragers}, \cite{ll2013cognition}, \cite{wollstonecroft2011investigating}, \cite{bradley2014past}, \cite{cook2020examining}). Many of these works are derived from the long-standing hypothesis that colder climates, offering large mammals as stable, high-calorie resources, could reduce the frequency of hunting events needed to sustain a community. In turn, fewer hunting events might create windows of leisure time. This additional time, no longer spent merely ensuring survival, could be invested in the production of symbolic objects, the performance of rituals, the refinement of storage methods, and the reinforcement of social bonds. Scholars have suggested that even subtle differences in the spoilage rates of stored food can shape the long-term evolutionary trajectory of societies, as lower spoilage reduces uncertainty and allows the accumulation of surplus food (see also \cite{clark2022domestic}, \cite{apicella2012social}).

However, existing narratives often treat technological and social management strategies as static or secondary. The complexity of social cooperation, the refinement of food preservation techniques, and the sophistication of storage containers and shelters are frequently acknowledged only as afterthoughts, rather than integral parts of the environmental-human nexus. Indeed, much earlier scholarship tended to regard "culture" as a byproduct of environmental abundance, rather than as an interactive and co-evolving component of the subsistence system itself. More recent discussions, influenced by theoretical and methodological advances in archaeology and cultural evolution studies, have emphasized that "culture" and "environment" are mutually constitutive. In other words, as resource management skills (such as improved storage techniques, effective preservation methods, or knowledge of seasonal resource distribution) develop, they not only mitigate spoilage and improve yields, but also reshape ecological constraints themselves (cf. \cite{kelly2013lifeways}, \cite{richerson2008not}, \cite{henrich2016secret}, \cite{zeder2012broad}, \cite{smith2011general}).

This study aims to situate itself at the intersection of these debates by offering a refined mathematical model coupled with a computational simulation. The core question guiding this research is: \emph{Does environmental stability, characterized by lower spoilage probability and higher yields, indirectly foster conditions more conducive to cultural complexity?} Our hypothesis does not rely on simple deterministic claims. Rather, we posit that, under certain assumptions, a group operating in a stable environment is likely to experience fewer constraints in food acquisition and preservation, thus freeing up valuable time. This additional time can then be allocated to cultural pursuits, be they artistic, ritualistic, or technological in nature. Over repeated generations, even modest reductions in spoilage or slight increases in yield can be amplified through feedback loops, allowing cultural complexity to take root and flourish.

To explore this hypothesis, we propose a formal model that captures the essential variables: yield, spoilage, resource management skill, and cultural complexity. We introduce a factor $G$ representing resource management ability (encompassing both technological and social know-how) and a factor $C$ representing cultural complexity. The effective yield per hunting event is modeled as increasing with both $G$ and $C$, capturing how cultural and technological sophistication can improve the subsistence returns. Meanwhile, daily spoilage $p$ continuously reduces stored food, making it a challenge. Agents (or “human groups”) attempt to maximize $C$ through a reinforcement learning (RL) algorithm \cite{mosavi2020comprehensive, souchleris2023reinforcement, abideen2021digital} that chooses whether to hunt, invest in resource management, or engage in cultural activities on a given day.

By running extensive simulations in which agents are exposed to various combinations of $Y$ (yield) and $p$ (spoilage), we can observe emergent patterns. While we cannot claim that these results constitute direct historical evidence, indeed, no simulation can replicate the full complexity of human prehistory, the patterns they reveal are suggestive. They indicate that stable, high-yield and low-spoilage environments lead agents to allocate less time to subsistence and more to cultural pursuits. Over time, this trend correlates with higher final values of $C$.

Unlike previous works that might have been limited to hand-waving claims or lacked formal rigor, we provide a tractable mathematical proof of the core proposition: under reasonable assumptions, a reduction in spoilage and an increase in yield reduce hunting frequency and thus increase available time for cultural elaboration. This theoretical result stands independent of any single simulation run. The simulations, in turn, serve as a form of empirical validation within the model’s own constraints, showing that the theoretical relationship holds even when multiple stochastic and dynamic factors are introduced.

It is important to emphasize that our approach does not claim that cold climates or stable conditions definitively caused cultural complexity. Rather, we argue that these environmental factors create a fertile ground, an enabling condition, within which cultural complexity can more easily emerge and intensify. Historical processes are always contingent, influenced by factors like social structures, cognitive capacities, migrations, and ecological disasters that we do not model here. Nonetheless, by showing a plausible causal chain, lower spoilage leads to fewer hunts, which leads to more free time, which in turn can be allocated to cultural activities, and we contribute to a growing body of literature that treats culture not as an isolated phenomenon, but as dynamically linked to environmental parameters.

The remainder of this paper will detail our modeling framework and computational experiments. In the Methods section, we outline the mathematical assumptions, the sets of actions, and the state variables. We then present theoretical propositions and a theorem that formalize the intuition behind our main claim. In the Experiments section, we describe the reinforcement learning setting and present the simulation results, supported by Ordinary Least Squares (OLS) regression analysis and data visualization. Finally, we reflect on the meaning of these findings for our understanding of early cultural complexity, acknowledging the limitations of any model that attempts to bridge deep prehistory and computational abstraction.

\section{Methods}

\subsection{Model Setup}
We consider a simplified ecological model in which a human group inhabits an environment characterized by two key parameters: daily yield $Y > 0$ from hunted fauna and daily spoilage probability $p \in [0,1]$ that reduces stored food each day. The group requires a fixed annual amount of food $F > 0$ and must allocate its time to hunting, investing in resource management skills, or engaging in cultural activities. Time is discretized into daily units on a fixed horizon $T$ days.

\begin{assumption}[Basic Requirements]
The group must secure a total amount of food $F$ over $T$ days. Each day, a fixed daily consumption $c > 0$ is subtracted from stored food. If the stored food becomes negative, the group suffers a starvation penalty.
\end{assumption}

\begin{assumption}[Actions]
On each day, the group selects one of the following actions:
\begin{enumerate}
\item \textbf{Hunt:} Costs $d_h > 0$ days (e.g., $d_h=2$). The group consumes daily rations during these days. After completing the hunt, they obtain an effective yield proportional to $Y$, adjusted by the group's management skill $G$ and cultural complexity $C$. The effective yield is:
\[
\text{eff}(Y,G,C) = Y \bigl(1 + 0.1G \bigr)\bigl(1 + 0.01C \bigr).
\]
\item \textbf{Invest (Resource Management):} Costs $d_i=1$ day. Improves $G$ by a fixed increment $\Delta G > 0$.
\item \textbf{Culture:} Costs $d_c=1$ day. Increases $C$ by a unit and grants a small immediate reward. 
\end{enumerate}
\end{assumption}

\begin{assumption}[Daily Spoilage]
At the end of each day, the stored food is reduced by a spoilage factor $p \in [0,1]$. If $f_{\text{current}}$ denotes current food, then:
\[
f_{\text{next}} = (f_{\text{current}} - c)(1 - p).
\]
If $f_{\text{next}} < 0$, the group incurs a large penalty and the process ends prematurely.
\end{assumption}

\begin{assumption}[Initial Conditions and Ending]
Initially, the group starts with a small positive amount of food (e.g., enough for several days). After $T$ days, the episode ends. We consider that no final additional reward is given, except that the final cultural complexity $C$ itself is recorded as a measure of success. Thus, the objective is to maximize $C$ without starving.
\end{assumption}

\subsection{Propositions and Theorem}

The following propositions formalize how lower spoilage ($p$) and higher yield ($Y$) affect time allocation and cultural complexity $C$. Lower $p$ implies that each hunted food unit remains available longer. Similarly, a larger $Y$ per hunt reduces the required hunting frequency.

\begin{assumption}[Comparative Conditions]
Consider two environments $A$ and $B$. In $A$, we have a higher yield $Y_A > Y_B$ and lower spoilage $p_A < p_B$. Both groups require $F$ units of food annually and have access to the same actions.
\end{assumption}

\begin{proposition}\label{prop:hunt}
Under the given assumptions, the environment $A$ requires fewer hunts to achieve the annual requirement $F$ due to $Y_A > Y_B$ and a lower effective daily spoilage.
\end{proposition}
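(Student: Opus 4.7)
The plan is to reduce the claim to a monotonicity statement about the net usable food delivered by a single hunt and then close with a simple accounting argument. First I would fix a common reference trajectory of management skill $G$ and cultural level $C$ so that the comparison between environments $A$ and $B$ isolates the influence of $(Y,p)$: since the action set, per-diem consumption $c$, horizon $T$, and annual requirement $F$ are identical across environments, any asymmetry in the number of hunts must be traceable to $Y$ and $p$ alone.

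Second, I would quantify the food actually made consumable by one hunt. A completed hunt deposits $\text{eff}(Y,G,C) = Y(1+0.1G)(1+0.01C)$ units into storage, after which the recursion $f_{\text{next}} = (f_{\text{current}} - c)(1-p)$ erodes the stock. Iterating this recursion gives a closed-form expression for the total food eventually drawn from that deposit, call it $\Phi(Y,p,G,C)$. A short lemma, proved by term-by-term comparison of the resulting geometric-type sum, shows that $\Phi$ is strictly increasing in $Y$ (which merely rescales the initial deposit) and strictly decreasing in $p$ (which enlarges every survival factor $(1-p)^{k}$). Combining this with the observation that $N_{E}$ hunts in environment $E$ supply at most $N_{E}\cdot\Phi(Y_{E},p_{E},G,C)$ consumable units, the requirement $N_{E}\cdot\Phi \geq F$ together with the monotonicity of $\Phi$ immediately yields $N_{A}<N_{B}$, since $Y_{A}>Y_{B}$ and $p_{A}<p_{B}$.

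The main obstacle is that in the dynamic setting the sequences of $G$ and $C$ realized in $A$ and $B$ need not coincide, so $\Phi$ is a priori evaluated at different arguments in the two environments. I would handle this by a synchronous coupling: run both environments under the same action schedule and argue inductively that on any matched day the group in $A$ holds at least as much stored food and has reached at least the same $G$ and $C$ as the group in $B$. Monotonicity of $\Phi$ in all four arguments then propagates the inequality forward, so the hunt count in $A$ cannot exceed that in $B$, and the strict gap in either $Y$ or $p$ makes the comparison strict. Aside from this coupling step, which is the only place requiring real care, the remaining manipulations are purely algebraic.
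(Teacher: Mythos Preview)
Your approach is essentially the same as the paper's: both argue that higher yield and lower spoilage make the effective retained food per hunt strictly larger in $A$, so fewer hunts suffice to meet the requirement $F$. The paper's proof is a two-sentence sketch that simply asserts this monotonicity and the resulting inequality $H_A<H_B$, without formalizing anything like your $\Phi$ or addressing the dynamic $G,C$ issue via coupling, so your version is substantially more rigorous than what actually appears there.
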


\begin{proof}
If the hunt yield is higher and the spoilage is lower, the effective retained food per hunt in $A$ is strictly greater than in $B$. Hence, to obtain $F$, fewer hunts are necessary in $A$. Thus, $H_A < H_B$, where $H_i$ is the number of hunts in the environment $i$.
\end{proof}

\begin{proposition}\label{prop:time}
Because fewer hunts are required in $A$, the group in $A$ allocates fewer total days to hunting, resulting in more spare days available for either cultural activities or resource management.
\end{proposition}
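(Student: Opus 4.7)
The plan is to convert the hunt-frequency inequality from Proposition \ref{prop:hunt} into a time-allocation inequality via a straightforward accounting identity. First, I would invoke Proposition \ref{prop:hunt} to assert $H_A < H_B$, then multiply both sides by the per-hunt cost $d_h > 0$ to obtain $d_h H_A < d_h H_B$, which is precisely the total number of days each group devotes to hunting over the horizon.

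Next, I would partition the fixed horizon $T$ into the three action categories defined in Assumption~2. Letting $I_i$ and $K_i$ denote the counts of invest and culture actions chosen in environment $i \in \{A, B\}$, every day of the $T$-day window is accounted for by one of these actions, yielding the identity $T = d_h H_i + d_i I_i + d_c K_i$. Rearranging gives the number of non-hunting days as $T - d_h H_i = d_i I_i + d_c K_i$. Because $d_h H_A < d_h H_B$, it follows immediately that $T - d_h H_A > T - d_h H_B$, so group $A$ has strictly more days available for either cultural activity or resource management investment, which is the content of the proposition.

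The main obstacle, modest in this setting, is ensuring the accounting identity is well-posed in both environments simultaneously: both groups must reach day $T$ without triggering the premature termination from starvation in Assumption~3. Since the comparative scenario of Assumption~5 presumes both groups meet the annual requirement $F$, and group $A$ faces uniformly more favorable yield and spoilage parameters, any feasible schedule for $B$ can be dominated by one for $A$; thus if $B$ survives the horizon, so does $A$, and the day-counting identity holds in both environments, validating the direct comparison of spare days.
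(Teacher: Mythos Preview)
Your proposal is correct and follows essentially the same approach as the paper: invoke $H_A < H_B$ from Proposition~\ref{prop:hunt}, scale by the per-hunt duration $d_h$, and subtract from the fixed horizon $T$ to compare free time. Your explicit day-accounting identity $T = d_h H_i + d_i I_i + d_c K_i$ and the feasibility check that both groups survive to day $T$ are careful elaborations, but the paper's own proof is simply the terse version of the same argument.
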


\begin{proof}
If $H_A < H_B$ and each hunt costs at least $d_h$ days, then the total hunting time $T_{hunt,A} < T_{hunt,B}$. Since $T$ is fixed, $T_{\text{free},A} = T - T_{hunt,A} > T - T_{hunt,B} = T_{\text{free},B}$.
\end{proof}

\begin{theorem}[Cultural Complexity Advantage]\label{thm:main}
Let $C_i$ denote the cultural complexity achieved in environment $i$. If $H_A < H_B$ and the environment $A$ invest the extra free time in cultural actions, then $C_A > C_B$. Furthermore, since a higher $C$ increases the future hunting efficiency through $(1+0.01C)$, environment $A$ experiences a strengthening cycle that further elevates cultural complexity.
\end{theorem}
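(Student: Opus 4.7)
The plan is to split the argument into two pieces that mirror the theorem's two clauses: a direct inequality $C_A > C_B$ from the free-time surplus, followed by a monotone feedback argument establishing the self-reinforcing cycle. I would begin by invoking Proposition~\ref{prop:time} to obtain $T_{\text{free},A} > T_{\text{free},B}$, and then translate this into a bound on cultural output using Assumption~2: because each \textbf{Culture} action costs exactly $d_c = 1$ day and increases $C$ by one unit, the number of culture actions available to a group is at most $T_{\text{free}}$. Under the theorem's hypothesis that environment $A$ allocates its extra free time to cultural actions, we get $C_A - C_B \ge T_{\text{free},A} - T_{\text{free},B} > 0$, provided the two groups share the same initial $C_0$ and have comparable non-cultural time allocations; I would make this baseline assumption explicit.

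Next I would address the ``strengthening cycle'' claim. The idea is to show that on any day $t$, the effective yield ratio $\mathrm{eff}(Y_A, G, C_A(t)) / \mathrm{eff}(Y_B, G, C_B(t))$ is strictly greater than $Y_A/Y_B$, because the factor $(1+0.01 C_A(t))/(1+0.01 C_B(t))$ exceeds $1$ once the first-part inequality $C_A(t) > C_B(t)$ is established. I would set up a discrete induction over hunting cycles: assuming $C_A(t) > C_B(t)$ at the start of cycle $t$, the group in $A$ extracts strictly more food per hunt, so in cycle $t+1$ it either hunts no more often than $B$ or requires even fewer hunts to meet the same food demand, yielding at least as much additional free time, which by hypothesis is funneled into culture. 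Hence $C_A(t+1) - C_B(t+1) \ge C_A(t) - C_B(t)$, with strict inequality whenever a hunt occurs in either environment, establishing monotone amplification of the cultural gap.

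The main obstacle is the second part: the feedback claim rests on behavioral assumptions about how each group reallocates time in response to efficiency gains, which the propositions above do not pin down formally. To keep the proof clean, I would impose a mild \emph{consistency of policy} condition: both groups follow a policy that meets the annual food target $F$ with the minimum necessary hunts and devotes residual time to culture. Under this policy, the inductive step is immediate; without it, one could artificially burn the $A$-group's savings on redundant hunts. I would flag this assumption explicitly and note that it is consistent with the reinforcement-learning optimization used in the experimental section, so the theorem's conclusion describes the behavior of rational or near-optimal agents rather than arbitrary ones.

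Finally, I would close by noting two technical caveats to check: (i) the spoilage dynamics from Assumption~3 do not interfere with the induction, since lower $p_A$ only enlarges the per-hunt retained food and thus works in the same direction as the $C$-feedback; and (ii) the starvation termination clause does not break the argument because the extra food margin in $A$ strictly reduces starvation risk relative to $B$. Both checks are routine but worth stating so that the inductive step is unambiguously well-defined.
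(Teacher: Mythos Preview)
Your proposal is correct and follows the same logical skeleton as the paper's proof: invoke Proposition~\ref{prop:time} for the free-time surplus, convert it to a culture-action count via the one-day cost in Assumption~2, conclude $C_A > C_B$, and then appeal to the $(1+0.01C)$ efficiency factor for the feedback loop. The paper's own argument is considerably terser and does not spell out the inductive structure over hunting cycles, the explicit policy-consistency assumption, or the spoilage and starvation caveats you raise, so your version is a more careful rendering of the same idea rather than a different route.
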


\begin{proof}
By Proposition \ref{prop:time}, environment $A$ has strictly more free days to allocate. Culture action increases $C$ by one unit per day and provides immediate rewards. Given that $A$ can afford more cultural actions (due to less frequent hunting), $C_A > C_B$. This increment in $C$ loops back to improve effective yield in future hunts, reinforcing the cycle and maintaining or increasing the gap.
\end{proof}

The theorem shows that under lower spoilage and higher yield conditions, groups can minimize the hunting frequency, thereby freeing up time for cultural activities. This leads to an emergent advantage in cultural complexity.

\section{Experiments}

\subsection{Simulation Procedure}
We conduct a computational experiment using a RL framework. Each agent represents a hypothetical human group operating within a given environmental condition. The agent selects actions from a discrete set (Hunt, Invest, Culture) each day to maximize cultural complexity $C$ while avoiding starvation. We simulate $N$ agents, each assigned a unique environment defined by a pair $(Y, p)$. The yield $Y$ and spoilage $p$ values differ among agents, allowing us to capture a wide range of ecological settings. We run multiple training episodes for each agent and evaluate their learned policies after training.

\subsection{Parameter Settings}
All agents share the same initial baseline conditions, including an initial food stock, a daily consumption rate, and a total number of days $T$. We vary $Y$ and $p$ between agents to cover a broad range of ecological conditions. After training is complete, we record the final cultural complexity $C$ achieved by each agent.

\subsection{Data Collection and Analysis}
We gather the final cultural complexity values $C$ from all agents and their respective $(Y, p)$ parameters. We apply an OLS regression to relate $C$ to $Y$ and $p$. We also generate plots to illustrate how $C$ correlates with $Y$ and $p$. Regression and visualization allow us to assess the agreement between computational outcomes and theoretical predictions.

\subsection{Reinforcement Learning Setting}
We formalize the theoretical model as a Markov decision process (MDP). The state at day $t$ includes:  
\begin{itemize}
\item Stored food level normalized by the annual requirement, $f_t/F$.
\item Resource management skill level $G$.
\item Cultural complexity $C$.
\item The fraction of remaining time, $(T-t)/T$.
\item Normalized yield, $Y/3000$, and spoilage probability $p$.
\end{itemize}

The agent chooses one action per day:
\begin{enumerate}
\item \textbf{Hunt:} Occupies $d_h=2$ days, each incurring daily consumption and spoilage. After completing the hunt, the agent obtains $\text{eff}(Y,G,C) = Y (1+0.1G)(1+0.01C)$ units of food.
\item \textbf{Invest:} Takes $1$ day. Improves $G$ by a fixed increment $\Delta G>0$. Daily consumption and spoilage apply.
\item \textbf{Culture:} Takes $1$ day. Increases $C$ by $1$ and gives a small immediate reward. Daily consumption and spoilage apply.
\end{enumerate}

The reward structure penalizes starvation. If $f_t$ falls below zero at the end of the day, the agent receives a $-100$ penalty and the episode ends.  Culture action gives a $+5$ immediate reward. No other action grants direct positive rewards. On the last day $T$, we record $C$ for analysis. Although $C$ is not added as a terminal reward, it represents the primary outcome.

\paragraph{RL Model and Optimization:}  
We implement an actor-critic RL model. The policy (actor) and value function (critic) are parameterized by neural networks with two hidden layers of 64 units each, using rectified linear unit (ReLU) activations. We orthogonally initialize all weight matrices to improve stability. The policy network outputs logits over the three possible actions, and the critic network outputs a scalar value estimate.

We employ an advantage-based policy gradient approach similar to A2C. Let $\pi_\theta(a|s)$ be the policy and $V_\phi(s)$ the value function. For a batch of collected episodes, we compute the advantages $A_t = R_t + \gamma V_\phi(s_{t+1}) - V_\phi(s_t)$, where $R_t$ is the immediate reward at time $t$ and $\gamma$ is the discount factor. We then update $\theta$ by ascending the gradient of the objective $J(\theta) = \mathbb{E}[\log \pi_\theta(a_t|s_t) A_t]$, and update $\phi$ by minimizing the mean-squared error between $V_\phi(s_t)$ and $R_t + \gamma V_\phi(s_{t+1})$. We use the Adam optimizer with a fixed learning rate and apply gradient clipping to maintain stable updates.

\paragraph{Training Details:}  
Each agent is trained for 50 episodes, each with $T=365$ days. In each episode, the agent attempts to secure enough food while allocating days to resource management or cultural activities. After training, we conducted 100 evaluation tests without learning updates and measured the mean $C$.

\paragraph{Parameter Variations:}  
We generate $N=1000$ agents, each assigned a random $(Y,p)$ from predefined ranges ($Y \in [1000,3000]$, $p \in [0.2,0.5]$). This large sample size provides sufficient variability to statistically analyze how changes in $Y$ and $p$ affect the results.

\begin{figure}
    \centering
    \includegraphics[width=0.7\linewidth]{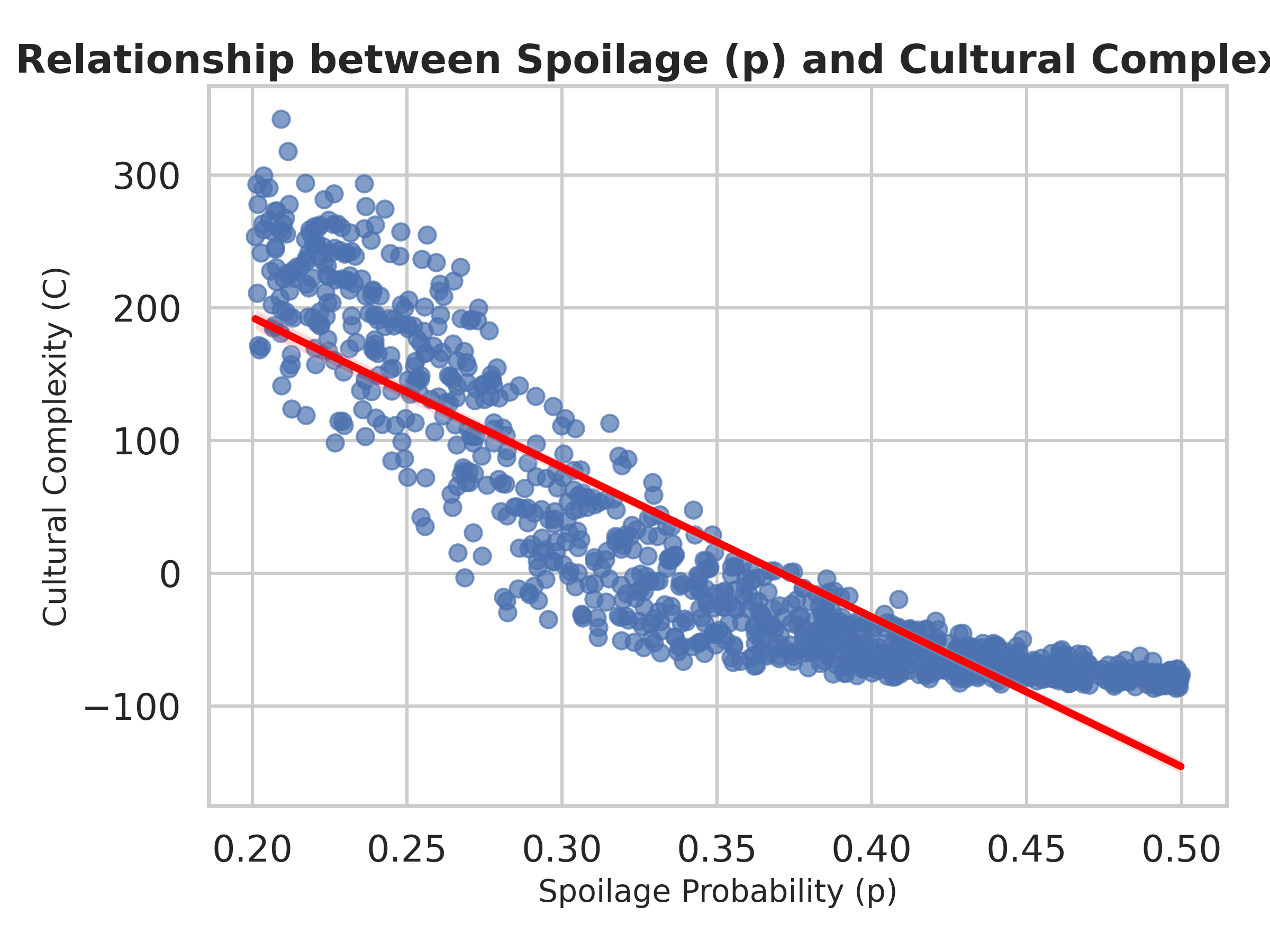}
    \caption{Cultural complexity $C$ as a function of spoilage probability $p$. Each point represents one agent. The red line is a linear fit. As $p$ increases, $C$ declines.}
    \label{fig:spoilage_vs_culture}
\end{figure}

\section{Results}

\subsection{Overall Patterns}
The results indicate a strong negative association between the probability of spoilage $p$ and cultural complexity $C$. Agents in environments with lower $p$ generally achieve higher $C$. While higher yield $Y$ also correlates with increased $C$, its effect is weaker than that of $p$. Figure~\ref{fig:spoilage_vs_culture} illustrates the steep decline in $C$ as $p$ increases. Figure~\ref{fig:yield_vs_culture} shows a gentle upward trend in $C$ as $Y$ grows.

\begin{figure}
    \centering
    \includegraphics[width=0.7\linewidth]{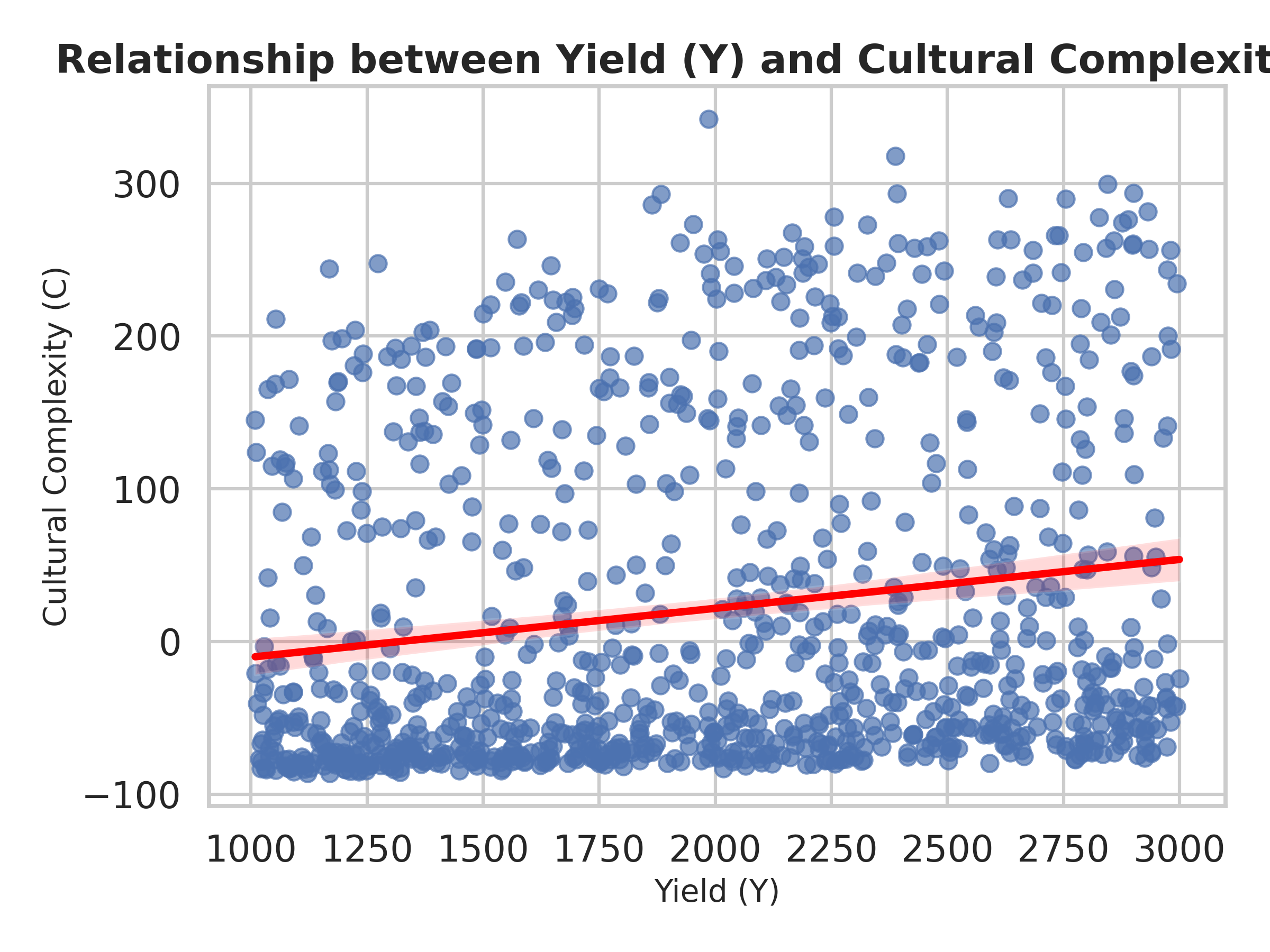}
    \caption{Cultural complexity $C$ as a function of yield $Y$. Each point represents one agent. The red line is a linear fit. As $Y$ increases, $C$ tends to rise modestly.}
    \label{fig:yield_vs_culture}
\end{figure}

\subsection{Statistical Relationships}
We fit an OLS regression model with $C$ as the dependent variable and $(Y,p)$ as independent variables:
\begin{table}[h!]
\centering
\begin{tabular}{lccc}
\hline
Parameter & Coefficient & p-value & StdErr \\
\hline
const & 347.45 & $<0.000001$ & 7.50 \\
$Y$ (x1) & 0.0370 & $<0.000001$ & 0.0024 \\
$p$ (x2) & -1134.83 & $<0.000001$ & 16.23 \\
\hline
\end{tabular}
\caption{OLS regression results. Both $Y$ and $p$ are highly significant. The negative coefficient for $p$ is large in magnitude, indicating that spoilage strongly reduces $C$.}
\label{tab:regression}
\end{table}

The regression (Table~\ref{tab:regression}) confirms that $p$ exerts a dominant negative influence on $C$. A small increase in $p$ leads to a large decrease in $C$. The effect of $Y$ is positive but less pronounced. These findings are consistent with the theoretical framework, where stable low-spoilage conditions reduce hunting frequency and allow for more cultural activities.

\begin{figure}
    \centering
    \includegraphics[width=0.7\linewidth]{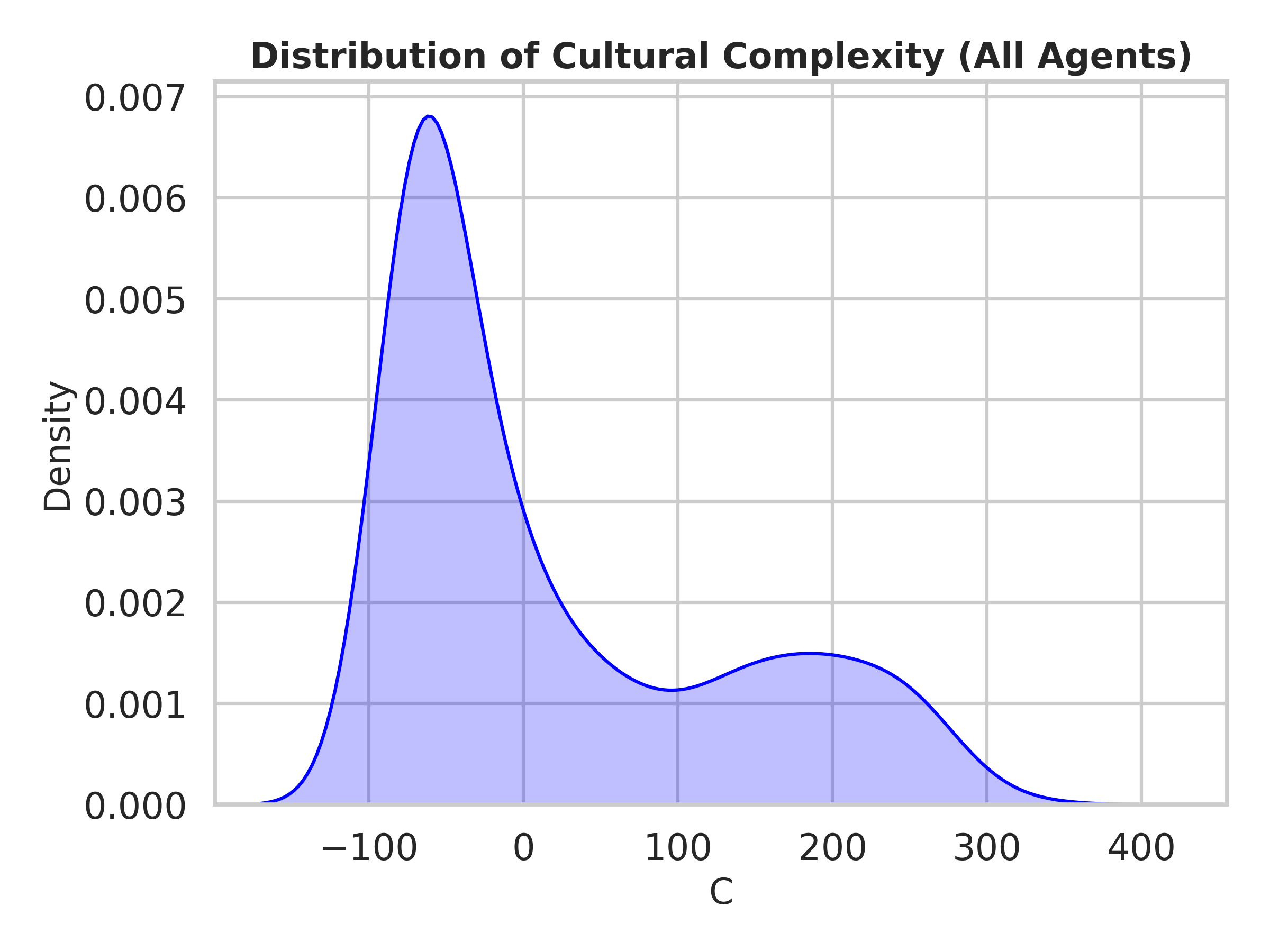}
    \caption{Distribution of cultural complexity $C$ across all agents. Most agents cluster near low or slightly negative $C$ values, but some achieve higher complexity.}
    \label{fig:distribution}
\end{figure}

Figure~\ref{fig:distribution} shows that many agents fail to achieve high $C$, reflecting frequent hunting and occasional starvation. However, some agents in stable environments reach substantially higher $C$. Figure~\ref{fig:heatmap} highlights how low $p$ combined with moderately high $Y$ fosters conditions conducive to cultural elaboration.

\begin{figure}
    \centering
    \includegraphics[width=0.7\linewidth]{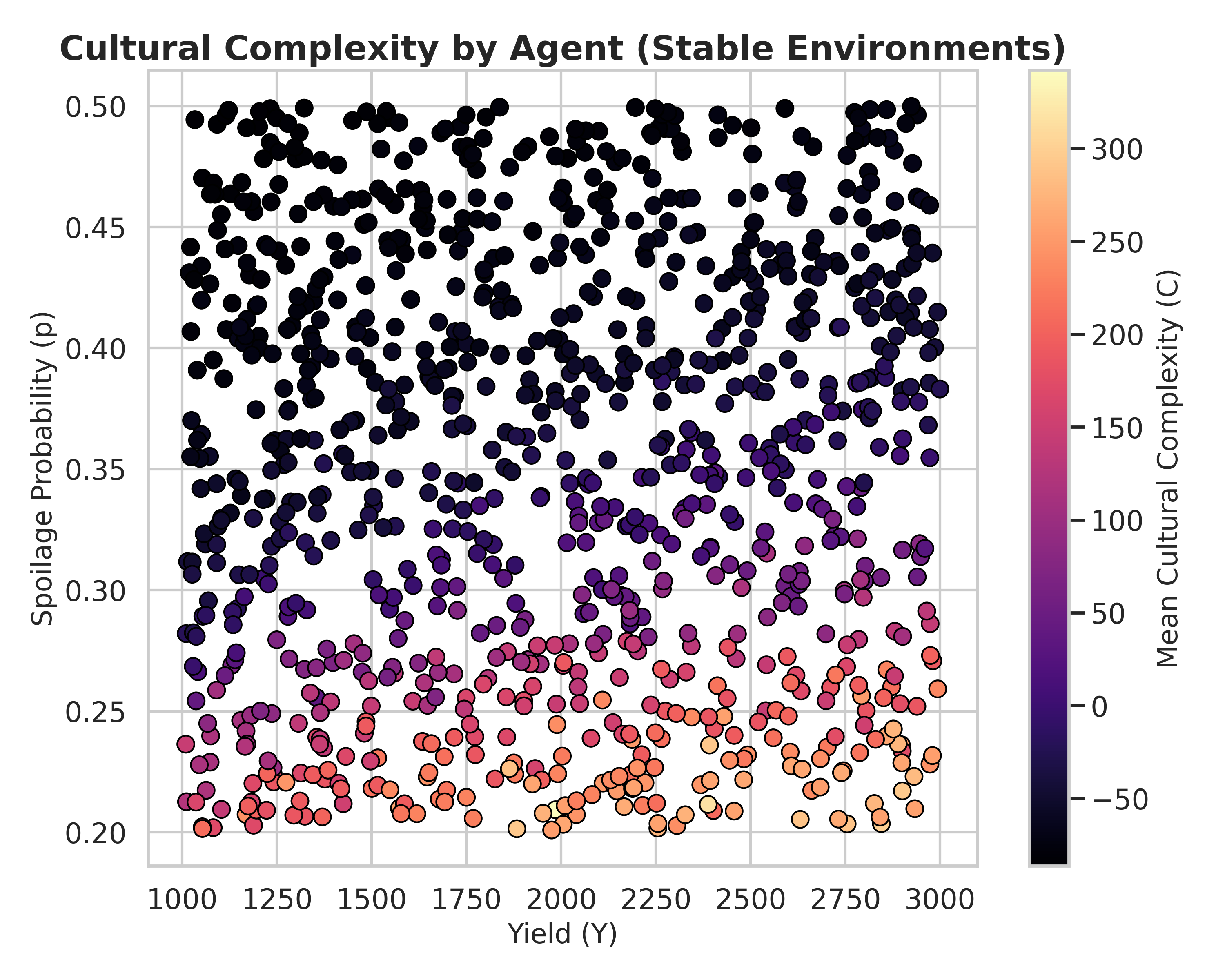}
    \caption{A two-dimensional map showing $C$ as a function of $Y$ (horizontal axis) and $p$ (vertical axis). Colors indicate mean $C$ achieved. Lower $p$ and higher $Y$ environments produce higher $C$.}
    \label{fig:heatmap}
\end{figure}

\section{Discussion}
Our findings align with theoretical arguments that environmental stability, represented by lower spoilage rates and adequate yields, can indirectly foster the conditions necessary for cultural complexity. This idea resonates with several strands of research in archaeology, anthropology, and cultural evolution.

Numerous studies have addressed the relationship between stable resource bases and cultural elaboration. For example, \citet{gamble2007origins} discusses how consistent resource availability in Pleistocene Europe could have facilitated symbolic behavior, while \citet{mithen1997prehistory} explores how cognitive fluidity and symbolic thought might have emerged when subsistence pressures diminished. Similarly, \citet{richerson2008not} emphasizes how cultural capacities can expand when ecological constraints are relaxed, and \citet{shennan2009pattern} links variation in subsistence strategies to the complexity of material culture.

In our model, the lower $p$ reduces the need for constant hunting, mirroring the arguments that stable conditions free humans from perpetual resource pursuit, allowing them to invest in activities unrelated to immediate survival. Authors such as \citet{klein2009human} and \citet{d2011origin} have noted that symbolic artifacts, ritual practices, and social cooperation intensify when groups face fewer environmental risks. Our computational results lend quantitative support to these claims. While these simulations do not replicate actual human prehistory, they suggest a plausible mechanism by which environmental parameters shape the space of possible cultural outcomes.

Another line of research, represented by works like \citet{boyd1988culture} and \citet{henrich2016secret}, highlights that cumulative cultural evolution thrives under conditions where knowledge transmission is reliable and long-term planning is possible. Our simulation shows that when daily spoilage is less severe, agents can plan beyond mere survival. This shift enables them to develop resource management ($G$) and engage in cultural activities ($C$) that would be impossible under harsher conditions. The model indicates that as groups improve their resource management skills, the returns on hunting days improve, reinforcing the cycle of cultural investment. This insight echoes \citet{laland2017darwin} and \citet{maryanski2013secret}, who argue that material and social factors co-evolve, each facilitating the complexity of the other.

Furthermore, our results link environmental constraints to the trajectories of cultural complexity in a manner consistent with \citet{powell2009late}, who proposed that demographic and ecological factors together shape the rates of cultural innovation. In stable, low-spoilage environments, agents have fewer hunts, more time to learn and transmit skills, and greater opportunities to invest in non-survival activities. Although we do not model demographic factors directly, the pattern of improved cultural complexity in stable environments suggests that demographic expansions could further enhance these effects, as larger group sizes may sustain more complex cultural repertoires.

While our approach is abstract, the implications resonate with arguments about environmental influences on cultural emergence. Historical societies that managed to store and preserve food, such as those discussed by \citet{wrangham2009catching} and \citet{prentice2009cultural}, often saw increases in social complexity and specialization. The emphasis of our model on daily spoilage captures a simplified version of these preservation challenges. When preservation is easier, as represented by lower $p$, agents can afford to experiment with cultural activities.

It is important to note that we do not claim that cold climates or stable ecosystems directly caused cultural complexity. Rather, we suggest that such conditions reduce the time spent on basic subsistence and thus create opportunities for cultural elaboration. The notion that the environment sets the stage, rather than determines the script, is aligned with the work of \citet{clarke1978analytical} and \citet{bettinger2015hunter}, who stress that the environment provides constraints and opportunities that societies navigate through cultural and technological strategies.

\section{Conclusion}
The emergence of cultural complexity in prehistoric foraging societies remains a central question in the humanities, prompting inquiries that span archaeology, anthropology, and beyond. These investigations have often highlighted the potential significance of environmental stability, including factors such as resource predictability and reduced spoilage, as enabling conditions that opened the temporal and cognitive space for symbolic, ritualistic, and technological innovation.

Our study contributes to these discussions by providing a formal and computationally supported model that captures the interplay between yield, spoilage, subsistence labor, and cultural investment. By demonstrating a quantitative link between ecological parameters and cultural outcomes, this work does not assert a deterministic cause-and-effect scenario. Instead, it proposes that certain ecological settings created more favorable grounds for cultural elaboration. These settings, characterized by relatively stable conditions, allowed human groups to invest fewer days in securing immediate sustenance and more days in pursuits that, over time, contributed to cultural complexity.

This approach resonates with long-standing debates in the humanities, where scholars have argued that cultural forms emerge not merely as epiphenomena of abundance, but as adaptive and co-evolving strategies shaped by environmental constraints. Our findings offer a structured, testable hypothesis: that lower spoilage probabilities and moderately increased yields are not just background variables but essential components that shape the tempo and mode of cultural evolution. Through computational simulations and theoretical proofs, we have illustrated how groups inhabiting environments with certain ecological parameters could, in principle, shift their time budgets away from relentless foraging toward activities that nurture symbolic behavior, technological refinement, and social cooperation.

In acknowledging these patterns, we do not discount other vital factors, social hierarchy, cognitive development, language complexity, demographic processes, or the role of migration and intergroup exchange. Rather, we present one segment of a larger tapestry in which environmental stability stands out as a significant, if indirect, determinant of cultural flourishing. Future research can extend this model to incorporate dynamic ecological cycles, fluctuating resource patterns, and culturally transmitted practices. Such integration would further align computational modeling with the rich empirical record of the humanities, potentially revealing deeper insights into how and why human culture assumed its remarkable diversity and complexity over the long arc of prehistory.

\vspace{1em}

\bibliography{output}

\begin{thebibliography}{30}
\providecommand{\natexlab}[1]{#1}
\providecommand{\url}[1]{\texttt{#1}}
\expandafter\ifx\csname urlstyle\endcsname\relax
  \providecommand{\doi}[1]{doi: #1}\else
  \providecommand{\doi}{doi: \begingroup \urlstyle{rm}\Url}\fi

\bibitem[Abideen et~al.(2021)Abideen, Sundram, Pyeman, Othman, and Sorooshian]{abideen2021digital}
Ahmed~Zainul Abideen, Veera Pandiyan~Kaliani Sundram, Jaafar Pyeman, Abdul~Kadir Othman, and Shahryar Sorooshian.
\newblock Digital twin integrated reinforced learning in supply chain and logistics.
\newblock \emph{Logistics}, 5\penalty0 (4):\penalty0 84, 2021.

\bibitem[Apicella et~al.(2012)Apicella, Marlowe, Fowler, and Christakis]{apicella2012social}
Coren~L Apicella, Frank~W Marlowe, James~H Fowler, and Nicholas~A Christakis.
\newblock Social networks and cooperation in hunter-gatherers.
\newblock \emph{Nature}, 481\penalty0 (7382):\penalty0 497--501, 2012.

\bibitem[Bettinger et~al.(2015)Bettinger, Garvey, and Tushingham]{bettinger2015hunter}
Robert~L Bettinger, Raven Garvey, and Shannon Tushingham.
\newblock \emph{Hunter-gatherers: archaeological and evolutionary theory}.
\newblock Springer, 2015.

\bibitem[Boyd and Richerson(1988)]{boyd1988culture}
Robert Boyd and Peter~J Richerson.
\newblock \emph{Culture and the evolutionary process}.
\newblock University of Chicago press, 1988.

\bibitem[Bradley(2014)]{bradley2014past}
Richard Bradley.
\newblock \emph{The past in prehistoric societies}.
\newblock Routledge, 2014.

\bibitem[Burdukiewicz(2014)]{burdukiewicz2014origin}
Jan~Micha{\l} Burdukiewicz.
\newblock The origin of symbolic behavior of middle palaeolithic humans: Recent controversies.
\newblock \emph{Quaternary International}, 326:\penalty0 398--405, 2014.

\bibitem[Chase(1994)]{chase1994symbols}
Philip~G Chase.
\newblock On symbols and the palaeolithic, 1994.

\bibitem[Clark et~al.(2022)Clark, Ranlett, and Stiner]{clark2022domestic}
Amy~E Clark, Sarah Ranlett, and Mary~C Stiner.
\newblock Domestic spaces as crucibles of paleolithic culture: An archaeological perspective.
\newblock \emph{Journal of Human Evolution}, 172:\penalty0 103266, 2022.

\bibitem[Clarke(1978)]{clarke1978analytical}
David~L Clarke.
\newblock \emph{Analytical archaeology}.
\newblock Columbia University Press, 1978.

\bibitem[Cook(2020)]{cook2020examining}
Sara~E Cook.
\newblock Examining our past relationship with climate to understand climate’s current importance: An exploration of climate change during the little ice age.
\newblock \emph{OKH Journal: Anthropological Ethnography and Analysis Through the Eyes of Christian Faith}, 4\penalty0 (2), 2020.

\bibitem[d’Errico and Henshilwood(2011)]{d2011origin}
Francesco d’Errico and Christopher~S Henshilwood.
\newblock The origin of symbolically mediated behaviour.
\newblock \emph{Homo symbolicus: The dawn of language, imagination and spirituality}, pages 49--74, 2011.

\bibitem[Gamble(2007)]{gamble2007origins}
Clive Gamble.
\newblock \emph{Origins and revolutions: human identity in earliest prehistory}.
\newblock Cambridge University Press, 2007.

\bibitem[Henrich(2016)]{henrich2016secret}
Joseph Henrich.
\newblock \emph{The secret of our success: How culture is driving human evolution, domesticating our species, and making us smarter}.
\newblock princeton University press, 2016.

\bibitem[Kelly(2013)]{kelly2013lifeways}
Robert~L Kelly.
\newblock \emph{The lifeways of hunter-gatherers: the foraging spectrum}.
\newblock Cambridge University Press, 2013.

\bibitem[Klein(2009)]{klein2009human}
Richard~G Klein.
\newblock \emph{The human career: Human biological and cultural origins}.
\newblock University of Chicago Press, 2009.

\bibitem[Laland(2017)]{laland2017darwin}
Kevin~N Laland.
\newblock \emph{Darwin's unfinished symphony: How culture made the human mind}.
\newblock Princeton University Press, 2017.

\bibitem[LL(2013)]{ll2013cognition}
APRIL N~OWE LL.
\newblock Cognition, behavioral modernity, and the archaeological record of the middle and early upper paleolithic.
\newblock \emph{Evolution of Mind, Brain, and Culture}, 5:\penalty0 235, 2013.

\bibitem[Maryanski(2013)]{maryanski2013secret}
Alexandra Maryanski.
\newblock The secret of the hominin mind: An evolutionary story.
\newblock \emph{Handbook of neurosociology}, pages 257--287, 2013.

\bibitem[Mithen(1997)]{mithen1997prehistory}
Steven Mithen.
\newblock The prehistory of the mind.
\newblock \emph{Cambridge Archaeological Journal}, 7:\penalty0 269--269, 1997.

\bibitem[Mosavi et~al.(2020)Mosavi, Faghan, Ghamisi, Duan, Ardabili, Salwana, and Band]{mosavi2020comprehensive}
Amirhosein Mosavi, Yaser Faghan, Pedram Ghamisi, Puhong Duan, Sina~Faizollahzadeh Ardabili, Ely Salwana, and Shahab~S Band.
\newblock Comprehensive review of deep reinforcement learning methods and applications in economics.
\newblock \emph{Mathematics}, 8\penalty0 (10):\penalty0 1640, 2020.

\bibitem[Powell et~al.(2009)Powell, Shennan, and Thomas]{powell2009late}
Adam Powell, Stephen Shennan, and Mark~G Thomas.
\newblock Late pleistocene demography and the appearance of modern human behavior.
\newblock \emph{Science}, 324\penalty0 (5932):\penalty0 1298--1301, 2009.

\bibitem[Prentice(2009)]{prentice2009cultural}
Richard Prentice.
\newblock Cultural responses to climate change in the holocene.
\newblock \emph{Anth{\'o}s}, 1\penalty0 (1):\penalty0 3, 2009.

\bibitem[Richerson and Boyd(2008)]{richerson2008not}
Peter~J Richerson and Robert Boyd.
\newblock \emph{Not by genes alone: How culture transformed human evolution}.
\newblock University of Chicago press, 2008.

\bibitem[Shennan(2009)]{shennan2009pattern}
Stephen Shennan.
\newblock \emph{Pattern and process in cultural evolution}.
\newblock Number~2. Univ of California Press, 2009.

\bibitem[Smith(2011)]{smith2011general}
Bruce~D Smith.
\newblock General patterns of niche construction and the management of ‘wild’plant and animal resources by small-scale pre-industrial societies.
\newblock \emph{Philosophical Transactions of the Royal Society B: Biological Sciences}, 366\penalty0 (1566):\penalty0 836--848, 2011.

\bibitem[Souchleris et~al.(2023)Souchleris, Sidiropoulos, and Papakostas]{souchleris2023reinforcement}
Konstantinos Souchleris, George~K Sidiropoulos, and George~A Papakostas.
\newblock Reinforcement learning in game industry—review, prospects and challenges.
\newblock \emph{Applied Sciences}, 13\penalty0 (4):\penalty0 2443, 2023.

\bibitem[Sterelny(2021)]{sterelny2021foragers}
Kim Sterelny.
\newblock Foragers and their tools: risk, technology and complexity.
\newblock \emph{Topics in Cognitive Science}, 13\penalty0 (4):\penalty0 728--749, 2021.

\bibitem[Wollstonecroft(2011)]{wollstonecroft2011investigating}
Mich{\`e}le~M Wollstonecroft.
\newblock Investigating the role of food processing in human evolution: a niche construction approach.
\newblock \emph{Archaeological and Anthropological Sciences}, 3:\penalty0 141--150, 2011.

\bibitem[Wrangham(2009)]{wrangham2009catching}
Richard Wrangham.
\newblock \emph{Catching fire: how cooking made us human}.
\newblock Basic books, 2009.

\bibitem[Zeder(2012)]{zeder2012broad}
Melinda~A Zeder.
\newblock The broad spectrum revolution at 40: resource diversity, intensification, and an alternative to optimal foraging explanations.
\newblock \emph{Journal of anthropological archaeology}, 31\penalty0 (3):\penalty0 241--264, 2012.

\end{thebibliography}
\bibliographystyle{plainnat}
\end{document}